\definecolor{aoenglish}{rgb}{0.0, 0.5, 0.0}
\definecolor{darkblue}{rgb}{0.0, 0.0, 0.55}
\definecolor{darkmagenta}{rgb}{0.55, 0.0, 0.55}
\definecolor{electricviolet}{rgb}{0.56, 0.0, 1.0}
\definecolor{electricyellow}{rgb}{1.0, 1.0, 0.0}
\definecolor{forestgreen}{rgb}{0.13, 0.55, 0.13}
\definecolor{fuchsia}{rgb}{1.0, 0.0, 1.0}
\definecolor{gamboge}{rgb}{0.89, 0.61, 0.06}
\definecolor{goldenpoppy}{rgb}{0.99, 0.76, 0.0}
\definecolor{indigo}{rgb}{0.29, 0.0, 0.51}
\definecolor{internationalorange}{rgb}{1.0, 0.31, 0.0}
\definecolor{lava}{rgb}{0.81, 0.06, 0.13}
\definecolor{selectiveyellow}{rgb}{1.0, 0.73, 0.0}
\definecolor{turquoiseblue}{rgb}{0.0, 1.0, 0.94}
\definecolor{turquoise}{rgb}{0.19, 0.84, 0.78}
\definecolor{modcol}{RGB}{255,102,178}
\newcommand{\rev}[1]{\textcolor{black}{#1}}
\definecolor{MFabg}{RGB}{76, 191, 229}
\newtheorem{thm}{Theorem}
\def \u {\mathbf{u}}
\def \U {\mathbf{U}}
\title{\LARGE \bf On the impact of regularization in data-driven predictive control}
\author{Valentina~Breschi~\IEEEmembership{Member,~IEEE,} Alessandro~Chiuso~\IEEEmembership{Fellow,~IEEE,}\\ Marco~Fabris~\IEEEmembership{Member,~IEEE,} and~Simone~Formentin~\IEEEmembership{Member,~IEEE}
\thanks{This project was partially supported by the Italian Ministry of University and Research under the PRIN'17 project \textquotedblleft Data-driven learning of constrained control systems \textquotedblright, contract no. 2017J89ARP.}
	\thanks{\rev{A. Chiuso and M. Fabris are with the Department of Information Engineering, University of Padova, Padua, Italy.
	V. Breschi is with Department of Electrical Engineering, Eindhoven University of Technology, Eindhoven, Netherlands.
	S. Formentin is with Dipartimento di Elettronica, Informatica e Bioingegneria (DEIB), Politecnico di Milano, Milan, Italy.} 
}
	\thanks{\mbox{Corresponding author: S. Formentin, {\tt \scriptsize \href{mailto:simone.formentin@polimi.it}{simone.formentin@polimi.it}}}
	}
}
\begin{document}
	
\maketitle
\thispagestyle{empty}
\pagestyle{empty}

\begin{abstract}
Model predictive control (MPC) is a control strategy widely used in industrial applications. However, its implementation typically requires a mathematical model of the system being controlled, which can be a time-consuming and expensive task. Data-driven predictive control (DDPC) methods offer an alternative approach that does not require an explicit mathematical model, but instead optimize the control policy directly from data. In this paper, we study the impact of two different regularization penalties on the closed-loop performance of a recently introduced data-driven method called $\gamma$-DDPC. Moreover, we discuss the tuning of the related coefficients in different data and noise scenarios, to provide some guidelines for the end user.
	
\end{abstract}

\begin{keywords}
  Data driven control, Predictive control for linear systems, Uncertain systems
\end{keywords}


\section{Introduction} 
\label{sec:intro}
Model predictive control (MPC) is a popular control strategy that has been successfully applied in a wide range of applications \cite{borrelli2017predictive}. However, a major limitation of MPC is that it requires a mathematical model of the system being controlled, which can be a costly and time-consuming task. This requirement has led to the development of data-driven predictive control (DDPC) methods, which aim to learn the control policy directly from data without the need for a mathematical model of the plant \rev{\cite{Baros2022onlineDeePC}} \cite{berberich2020data,breschi2022design,sassella2022data}.

Nonetheless, the data-based predictor used in DDPC is not exempt from shortcomings, due to the 
presence of noise on the measured data. Therefore, different techniques have been proposed to make the closed-loop performance less sensitive to such a noise, e.g., robust design in case hard power bounds are given \cite{berberich2020data}, dynamic mode decomposition \cite{sassella2022noise} and regularization \cite{dorfler2022bridging}. The latter in particular can be used to prevent the data-based predictor to overfit the historical data, by tuning a few penalty coefficients. In the pioneering work \cite{dorfler2022bridging}, the design of such terms is discussed for different kinds of regularization, and the authors highlight 
the significant efforts required in terms of trial-and-error tuning, especially as far as some specific parameters are concerned. In \cite{breschi2022roleArXiv}, we showed that regularization may be avoided in case the data set is large enough and the DDPC problem is reformulated thanks to subspace identification tools, so as to shrink the number of decision variables, into the so-called $\gamma$-DDPC method. Finally, in \cite{breschi2022uncertainty}, we have 
focused on 
finite size 
data sets and used 
asymptotic arguments to show that regularization might instead be useful to counteract the prediction error variance, due to the use of noisy data in the predictor. Two different regularization options have been introduced, and an on-line tuning of 
the associated penalizations has been proposed, based on the prior knowledge of 
the variance expression.

This paper\rev{'s contribution is built upon} 
\cite{breschi2022uncertainty}, since our goal here is to analyze the joint tuning of the two regularization terms of $\gamma$-DDPC and analyze their impact on the closed-loop performance. In particular, we shall discuss the role of the driving input color (spectra) and some qualitative guidelines about regularization design will be drawn by means of extensive simulations on a benchmark linear system as well as on a challenging nonlinear problem, namely, wheel slip control in braking maneuvering. Finally, offline and on-line regularization tuning will be compared.

The remainder of the paper is as follows. In Section \ref{sec:setting}, the predictive control problem setting is described, and the regularization tuning issue is mathematically formulated. Section \ref{sec:reg} illustrates the considered regularization techniques for $\gamma$-DDPC and discusses the role of each term, also by means of two numerical case studies. The paper is ended by some concluding remarks.

\textit{Notation.} Given a signal (say $u(t) \in \mathbb{R}^m$\rev{)}, the associated (block) Hankel matrix $U_{[t_0,t_1],N} \in \mathbb{R}^{m(t_1-t_0+1) \times N}$ is defined as:
\begin{equation}\label{eq:Hankel}
	U_{[t,s],N}\!:=\!\!\frac{1}{\sqrt{N}}\!\begin{bmatrix}
		u(t) & u(t\!+\!1) & \cdots & u(t\!+\!N\!-\!1)\\
		u(t\!+\!1) & u(t\!+\!2) & \cdots & u(0\!+\!N)\\
		\vdots & \vdots & \ddots & \vdots\\
		u(s) &u(s\!+\!1) & \dots & u(s\!+\!N\!-\!1)  
	\end{bmatrix}\!\!,
\end{equation}
while we use the shorthand $U_{t}:= U_{[t,t],N}$ to denote a single (block) row Hankel, namely:
\begin{equation}\label{eq:Hankel:onerow}
	U_{t}:= \frac{1}{\sqrt{N}}\begin{bmatrix}
		u(t) & u(t\!+\!1) & \cdots & u(t\!+\!N\!-\!1)
	\end{bmatrix}. 
\end{equation}

\section{Problem setting}\label{sec:setting}
Our goal is to design a controller for an 
\emph{unknown} plant that can be modeled by \emph{linear time-invariant} (LTI) discrete-time linear (stochastic) system $\mathcal{S}$.  Without loss of generality, we consider its state space description  in 
%
%
\emph{innovation form}
\begin{equation}\label{eq:stoc_sys}
	\begin{cases}
		x(t+1)=Ax(t)+Bu(t)+Ke(t)\\
		y(t)=Cx(t)+Du(t)+e(t), 	\end{cases} \quad  t \in \mathbb{Z},
\end{equation}
where $x(t)\in \mathbb{R}^{n}$, $u(t) \in \mathbb{R}^{m}$ and $e(t) \in \mathbb{R}^{p}$ are the state, input and  innovation process  respectively, while $y(t) \in \mathbb{R}^{p}$ is the corresponding output signal.

Under the \emph{unrealistic} assumption that the system matrices $(A,B,C,D,K)$ are known, the predictive constrained tracking control problem of interest for this paper (for a given reference $y_{r}(t)$ and a prediction horizon $T$) can be formulated as follows 
\begin{subequations}\label{eq:RHPC_prob}
	\begin{align}
		&\underset{\{u(k)\}_{t}^{t+T-1}}{\mbox{minimize}}~\frac{1}{2}\sum_{k=t}^{t+T-1} \ell(u(k),\hat{y}(k),y_{r}(k)) \label{eq:cost}\\
		& \mbox{s.t. } \hat x(k\!+\!1)\!=\!\!A\hat x(k)\!+\!Bu(k),~k \!\in\! [t,t\!+\!T),\\
		& \qquad  \hat y(k)\!=\!C\hat x(k)+Du(k),~k \in [t,t+T),\\ 		
		&\qquad \hat x(t) = \hat x_{init},\\ 
		&\qquad u(k) \in \mathcal{U},~\hat y(k) \in \mathcal{Y},~k \in [t,t+T),
	\end{align}
	where {{$k \in \mathbb{Z}$}}, $\hat x_{init}$ is the state-estimate at time $t$, which can be obtained by running a conventional Kalman filter given the input-output measurements available up to time $t$, and  $y_r$ is the reference signal. 
	\rev{Also}, $\ell(\cdot)$ is a convex loss function, penalizing both the tracking performance and the control effort, \emph{e.g.,} 
	\begin{equation}\label{eq:loss}
		\ell(u(k),\hat{y}(k),y_{r}(k))=\|\hat{y}(k)\!-\! y_{r}(k)
		\|_{Q}^{2}\!+\!\|u(k)\rev{-u_r(k)}\|_{R}^{2},
	\end{equation}
\end{subequations}
where the penalties $Q \in \mathbb{R}^{p \times p}$ and $R \in \mathbb{R}^{m \times m}$, with $Q \succeq 0$ and $R \succ 0$, are selected to trade-off between tracking performance and control effort. 

A standard assumption in \emph{data-driven} control is that the system matrices $(A,B,C,D,K)$ are \emph{not known}, and only a  finite sequence of input/output data $\mathcal{D}_{N_{data}}=\{u(j),y(j)\}_{j=1}^{N_{data}}$. We would like to stress that in our  framework measured data are by assumption noisy, in the sense that there is no LTI system that, with the given input $u(t)$, produces exactly the measured output \rev{in a deterministic way}. 

In this paper, we follow that data-driven predictive control problem formulation provided in \cite{breschi2022role,breschi2022uncertainty}, and we refer to those papers for a connection with the recent related literature such as \cite{dorfler2022bridging,berberich2020data}.

To this purpose, we need to introduce the Hankel matrices, including past and future values of inputs and outputs, with respect to time $t$. In particular, with obvious use of the subscripts $P$ and $F$, we define:
\begin{align}\label{eq:future}
	U_F\!:=&U_{[\rho,\rho+T-1],N},~Y_F\!:=\! Y_{[\rho,\rho+T\!-\!1],N},
\end{align}
where $N:=N_{data}-T-\rho$ and $\rho$ is the \textquotedblleft past horizon\textquotedblright. 

Based on \eqref{eq:stoc_sys} the Hankel $Y_F$ can be written as
\begin{subequations}
	\begin{equation}
		{Y}_{F}=\Gamma X_{\rho} +\mathcal{H}_{d}U_{F}+ \mathcal{H}_{s}E_F, 
	\end{equation}
	where $E_F$ is the Hankel of future innovations,
	\begin{equation}\label{eq:observability}
		\Gamma=\begin{bmatrix} C \\ CA \\ CA^{2}\\ \vdots \\ CA^{T-1} \end{bmatrix},
	\end{equation}   
	and $\mathcal{H}_{d} \in \mathbb{R}^{pT \times mT}$ and $\mathcal{H}_s \in \mathbb{R}^{pT \times pT}$ are the Toeplitz matrices formed with the Markov parameters of the system, namely
	\begin{align}
		& 	\mathcal{H}_{d} = \begin{bmatrix} 
			D & 0  & 0 & \dots & 0  \\
			CB & D  &  0 &\dots & 0 \\
			CAB & CB & D  &  \dots & 0 \\
			\vdots & \vdots  & \vdots &  \ddots & \vdots & \\
			CA^{T-2}B & CA^{T-3}B  & CA^{T-4}B & \ldots &D 
		\end{bmatrix},\\
		&\mathcal{H}_s = \begin{bmatrix} 
			I & 0  & 0 & \dots & 0 \\
			CK & I  &  0 &\dots & 0 \\
			CAK & CK & I  &  \dots & 0 \\
			\vdots & \vdots  & \vdots &  \ddots & \vdots  \\
			CA^{T-2}K & CA^{T-3}K  & CA^{T-4}K & \ldots &I 
		\end{bmatrix}. 
	\end{align}
\end{subequations}

Let us now define $z(t)$ as the joint input/output process
\begin{equation}\label{eq:z}
	z(t):=\begin{bmatrix} u(t)\\
		y(t)
	\end{bmatrix},
\end{equation}
with the associated Hankel matrix being $Z_P\!\!:=\!Z_{[0,\rho-1],N}$. The orthogonal projection of $Y_F$ onto the row space of $Z_P$ and $U_F$ turns out to be given by
\begin{align}
	\hat{Y}_{F}&=\Gamma \hat X_{\rho} +\mathcal{H}_{d}U_{F}+ \underbrace{\mathcal{H}_{s}\Pi_{Z_{P},U_{F}}(E_F)}_{O_P(1/\sqrt{N})} \label{eq:Proj:F}
\end{align}
where the last term vanishes\footnote{For a more formal statement on this, we refer the reader to standard literature on subspace identification.} (in probability) as $1/\sqrt{N}$.
This means that, when the matrices $(A,B,C,D,K)$ are \emph{unknown}, future outputs can still be predicted directly from data. In fact, given any (past) joint input and output trajectory and future control inputs 
\begin{equation}\label{eq:zinit}
	\begin{matrix}
		z_{init}:=\begin{bmatrix}
			z(t-\rho)\\
			\vdots\\
			z(t-2)\\
			z(t-1)
		\end{bmatrix}, \quad & u_{f}:=\begin{bmatrix}
			u(t)\\
			u(t+1)\\
			\vdots\\
			u(t+T-1)
		\end{bmatrix},
	\end{matrix}
\end{equation}
the prediction $\hat y_f$ of future outputs $y_f$ 
\begin{equation}
	y_{f}:=\begin{bmatrix}
		y(t)\\
		y(t+1)\\
		\vdots\\
		y(t+T-1)
	\end{bmatrix},
\end{equation}
based on past inputs $z_{init}$ and future inputs $u_f$ 
can be obtained from\footnote{Conditions on $
	\rho$ for this to hold are provided in \cite{breschi2022role}.}
\begin{equation}\label{eq:DDPC_standard}
	\begin{bmatrix}
		z_{init}\\u_{f}\\ \hat y_{f}			
	\end{bmatrix}=\begin{bmatrix}
		Z_{P}\\U_{F}\\ \hat Y_{F}
	\end{bmatrix}\alpha + O_P(1/
	\sqrt{N}), 
\end{equation}
with $\alpha \in \mathbb{R}^{N}$ to be optimized as in, e.g., \cite{berberich2020data}, \cite{breschi2022role}, \cite{coulson2019data}. 

Following subspace identification \cite{Vanov-book} ideas,  the orthogonal projection \eqref{eq:Proj:F} can be written  exploiting the LQ decomposition of the data matrices. In particular, let us define 
\begin{equation}\label{eq:LQ}
	\begin{bmatrix}
		Z_{P}\\U_{F}\\ Y_{F}
	\end{bmatrix}=
	\begin{bmatrix}
		L_{11} & 0 & 0  \\
		L_{21} & L_{22} &  0\\
		L_{31} & L_{32} & L_{33} 
	\end{bmatrix}
	\begin{bmatrix}
		Q_{1}\\
		Q_{2}\\
		Q_{3}
	\end{bmatrix}.
\end{equation}
where the matrices $\{L_{ii}\}_{i=1}^{3}$ are all non-singular and $Q_{i}$ have orthonormal rows, i.e., $Q_{i}Q_{i}^{\top}=I$, for $i=1,\ldots,3$, $Q_i Q_j^\top = 0$, $i\neq j$. The orthogonal projection 
\eqref{eq:Proj:F}  can be written in the form:
$$
\hat Y_F = L_{31} Q_1 + L_{32}Q_2
$$
With this notation, following the same rationale of \cite{breschi2022role,breschi2022uncertainty}, we can further reformulate \eqref{eq:DDPC_standard} as:
\begin{equation}\label{eq:LQ_prediction}
	\begin{bmatrix}z_{init}\\u_{f}\\ \hat y_{f}		
	\end{bmatrix}=\begin{bmatrix}
		Z_{P}\\U_{F}\\ \hat Y_{F}
	\end{bmatrix}\alpha=
	\begin{bmatrix}
		L_{11} & 0   \\
		L_{21} & L_{22} \\
		L_{31} & L_{32} 
	\end{bmatrix}
	\underbrace{\begin{bmatrix}
			Q_{1}\\
			Q_{2}
		\end{bmatrix}\alpha}_{\gamma} + O_P(1/
	\sqrt{N}).
\end{equation}

and the parameters
\begin{equation}\label{eq:gamma}
	\gamma=\begin{bmatrix}
		\gamma_{1}\\
		\gamma_{2}
	\end{bmatrix},
\end{equation}
become the new decision variables. In addition, in \cite{breschi2022uncertainty} it was suggested to add a (slack) optimization variable $\gamma_3$ to model the projection error in \eqref{eq:Proj:F} and avoid overfitting. In particular, the prediction  (with slack) can be written as:
$$
\bar y_f = \underbrace{\begin{bmatrix}
		L_{31} & L_{32} 
	\end{bmatrix} \begin{bmatrix}
		\gamma_{1}\\\gamma_{2}
\end{bmatrix}}_{=\hat y_f} + L_{33}\gamma_3
$$

We refer the reader to \cite{breschi2022uncertainty} for a sound statistical motivation of this particular expression of the slack $L_{33}\gamma_3$. In particular, since $L_{33}$ is generically of full rank, constraints/regularization should be imposed on the slack optimization variable $\gamma_3$.


A data-driven predictive controller with the same objectives and constraints of \eqref{eq:RHPC_prob} can be formulated as follows \cite{breschi2022role}
\begin{subequations}\label{eq:RHPC_prob_dd_gamma}
	\begin{align}
		&\underset{\gamma_2,\gamma_3}{\mbox{min}}~\frac{1}{2}\sum_{k=t}^{t+T-1} \ell(u(k),\bar y(k),y_{r}(k)) + \Psi(\gamma_1,\gamma_2,\gamma_3) \label{eq:cost_gammaDDPC}\\
		&~~\mbox{s.t.}~~\begin{bmatrix}
			u_{f}\\
			\bar y_{f}
		\end{bmatrix}=\begin{bmatrix}
			L_{21} & L_{22} & 0 \\
			L_{31} & L_{32} & L_{33}
		\end{bmatrix}\begin{bmatrix}
			\gamma_{1}^\star\\\gamma_{2} \\ \gamma_3
		\end{bmatrix} \label{eq:prediction_model3},\\
		&~~~~~~~~~u(k) \in \mathcal{U},~\bar y(k) \in \mathcal{Y},~k \in [t,t+T), \label{eq:constraints2}
	\end{align}
\end{subequations}
with 
\begin{equation}\label{eq:loss}
	\ell(u(k),\bar y(k),y_{r}(k))=\|\bar y(k)\!-\! y_{r}(k)
	\|_{Q}^{2}\!+\!\|u(k)\|_{R}^{2},
\end{equation}
and
\begin{equation}\label{eq:init_terms}
	\gamma_{1}^\star=
	L_{11}^{-1}z_{init},
\end{equation}
where $z_{init}$ is defined as in \eqref{eq:zinit} and the choice of $\gamma_{1}$ straightforwardly follows from the initial conditions (showing the advantages of using $\gamma$ instead of $\alpha$ as the decision vector). 



The purpose of this paper is \textit{to study the design and impact  of the regularization term $\Psi(\gamma_1,\gamma_2,\gamma_3)$ within a noisy stochastic environment, and provide the end user with useful hints on how to tune such a penalty term}.

\section{The role of regularization}\label{sec:reg}

In \cite{breschi2022uncertainty}, it has been argued that the average variance of the error on the future output predictions $\hat y_f$ due to the finite data projection errors in \eqref{eq:Proj:F}, is proportional to $\|\gamma_1\|^2  + \|\gamma_2\|^2$. Since, in the optimization problem \eqref{eq:RHPC_prob_dd_gamma}, $\gamma_1$ is determined by the initial conditions, it only remains to regularize $\gamma_2$ so as to avoid an (unnecessarily) high variance on the predictor and, therefore, poor control performance. In this paper, we consider also an alternative regularization term that penalizes directly the control input effort (in addition to the control penalty already embedded in the control cost), and discuss its relation with regularization on $\gamma_2$. Differently from  \cite{breschi2022uncertainty}, we consider this jointly with presence of a slack variable $\gamma_3$ and thus a related regularization. These considerations lead to the following two forms  of the regularization term $\Psi(\gamma_1,\gamma_2,\gamma_3)$ in 
\eqref{eq:RHPC_prob_dd_gamma}:

\begin{enumerate}
	\item[(a)] {\bf Regularization on $\gamma_2$ and slack $\gamma_3$} 
	\begin{equation}\label{eq:reg1}
		\Psi_{\gamma_2}(\gamma_1,\gamma_2,\gamma_3) := \beta_2\|\gamma_2\|^2 + \beta_3 \|\gamma_3\|^2 ;
	\end{equation}
	\item[(b)] {\bf Regularization on input $u_f$ and slack $\gamma_3$} 
	\begin{align}\label{eq:reg2}
		\Psi_u(\gamma_1,\gamma_2,\gamma_3) :=& \beta_2\|u_f\|^2 + \beta_3 \|\gamma_3\|^2 \nonumber \\
		 = & \beta_2\|L_{21} \gamma_1 + L_{22} \gamma_2\|^2 + \beta_3 \|\gamma_3\|^2 ;
	\end{align}
	
\end{enumerate}

\rev{where $(\beta_{2},\beta_{3})$ are hyper-parameters to be determined.}

\subsection{Theoretical analysis}
We first state a Theorem the establishes the connection between \eqref{eq:reg1} and \eqref{eq:reg2}.

\begin{thm}\label{thm:reg}
	If the training input sequence $u(t)$ in the Hankel matrices $U_F$ and $U_P$ is (zero mean) white with variance $\sigma^2 I$, the regularization terms $\Psi_{\gamma_2}$ in \eqref{eq:reg1} and $\Psi_{u}$ in \eqref{eq:reg2} are asymptotically (in $N$) equivalent up to a rescaling of the weight $\beta_2$.
\end{thm}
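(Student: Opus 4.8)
The plan is to exploit the block-triangular structure of the LQ factorization \eqref{eq:LQ} together with the whiteness of the training input, and to show that, as $N\to\infty$, the term $\|u_f\|^2$ that defines $\Psi_u$ collapses to $\sigma^2\|\gamma_2\|^2$ up to a summand that is constant in the decision variables. Once this is established, $\Psi_u$ and $\Psi_{\gamma_2}$ differ only by a constant and by the substitution $\beta_2\mapsto\sigma^2\beta_2$, hence yield the same optimizer of \eqref{eq:RHPC_prob_dd_gamma}.

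First I would read off from \eqref{eq:LQ} and the orthonormality relations $Q_iQ_j^\top=\delta_{ij}I$ that $U_F=L_{21}Q_1+L_{22}Q_2$, so that
$$U_FU_F^\top=L_{21}L_{21}^\top+L_{22}L_{22}^\top,\qquad Z_PU_F^\top=L_{11}L_{21}^\top.$$
Because the Hankel matrices carry the $1/\sqrt{N}$ normalization of \eqref{eq:Hankel}, these products are sample (cross-)covariances. Under the assumption that $u(t)$ is zero-mean white with covariance $\sigma^2 I$, and recalling that the identification data are collected in open loop so that $U_F$ is uncorrelated with the past joint regressor entering $Z_P$ (past inputs are white, past outputs depend only on past inputs and past innovations), the same law-of-large-numbers / ergodicity argument that underlies the vanishing projection error in \eqref{eq:Proj:F} gives
$$U_FU_F^\top\;\longrightarrow\;\sigma^2 I_{mT},\qquad Z_PU_F^\top\;\longrightarrow\;0 .$$

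Next, since $L_{11}$ is non-singular, $L_{11}L_{21}^\top\to0$ forces $L_{21}\to0$; the first limit then yields $L_{22}L_{22}^\top\to\sigma^2 I_{mT}$, and passing to the SVD of the square matrix $L_{22}$ (all its singular values tend to $\sigma$) shows equivalently that $L_{22}^\top L_{22}\to\sigma^2 I_{mT}$. Substituting $u_f=L_{21}\gamma_1+L_{22}\gamma_2$ with $\gamma_1=\gamma_1^\star$ fixed by \eqref{eq:init_terms} (and bounded, since $L_{11}^{-1}$ converges),
$$\|u_f\|^2=\gamma_2^\top L_{22}^\top L_{22}\gamma_2+2\gamma_2^\top L_{22}^\top L_{21}\gamma_1^\star+\|L_{21}\gamma_1^\star\|^2\;\longrightarrow\;\sigma^2\|\gamma_2\|^2,$$
the cross term vanishing because $L_{21}\to0$ and the last term being a constant independent of $(\gamma_2,\gamma_3)$. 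Therefore $\Psi_u=\beta_2\|u_f\|^2+\beta_3\|\gamma_3\|^2$ converges, modulo that additive constant, to $\sigma^2\beta_2\|\gamma_2\|^2+\beta_3\|\gamma_3\|^2$, i.e.\ to $\Psi_{\gamma_2}$ with $\beta_2$ rescaled to $\sigma^2\beta_2$; as an additive constant does not affect the minimizer, the two regularized problems coincide asymptotically, which is the claim.

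The main obstacle is making the two limits $L_{21}\to0$ and $L_{22}^\top L_{22}\to\sigma^2 I$ precise in the appropriate (almost-sure or in-probability) sense: this rests on (i) convergence of the sample covariance of a white sequence to $\sigma^2 I$, and (ii) asymptotic decorrelation of future inputs from the past data in $Z_P$, which is exactly where the open-loop data-generation assumption is needed so that $U_F$ is independent of the innovations feeding $Y_P$. Everything downstream is elementary linear algebra on the LQ blocks; the only technical nicety worth stating explicitly is the passage from $L_{22}L_{22}^\top\to\sigma^2 I$ to $L_{22}^\top L_{22}\to\sigma^2 I$ via the singular values of $L_{22}$.
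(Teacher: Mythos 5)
Your proposal is correct and follows essentially the same route as the paper's proof: both use the LQ blocks to show $L_{21}\to 0$ (from decorrelation of future inputs with the past) and $L_{22}L_{22}^\top\to\sigma^2 I$ (from whiteness), then substitute into $u_f=L_{21}\gamma_1+L_{22}\gamma_2$ to conclude $\Psi_u\simeq\beta_2\sigma^2\|\gamma_2\|^2+\beta_3\|\gamma_3\|^2$. Your treatment is marginally more careful in passing from $L_{22}L_{22}^\top\to\sigma^2 I$ to $L_{22}^\top L_{22}\to\sigma^2 I$ via the singular values (the paper simply writes $L_{22}\simeq\sigma I$), but this is a refinement of the same argument, not a different one.
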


\begin{proof}
	Under the assumption that $u(t)$ is white noise, then the future inputs are uncorrelated with past input and output data, so that the projection $\hat U_F:=\Pi_{Z_{P}}(U_F)$ of $U_F$ on the joint past $Z_P$ tends to zero as \rev{$O_P(1/\sqrt{N})$}, more precisely
 
	\begin{equation}\label{eq:L21}
		\hat U_F: = L_{21}Q_1.
	\end{equation}
	Since $Q_1Q_1^\top = I$, it follows that $L_{21} = O_P(1/\sqrt{N})$. In addition, since $u$ is white, its sample covariance matrix $U_FU_F^\top$ converges to $\sigma^2 I$, i.e. 
	\begin{equation}\label{eq:L22}
		U_FU_F^\top = \underbrace{L_{21}L_{21}^\top}_{O_P(1/{N})} + L_{22}L_{22}^\top \mathop{\longrightarrow}^{N \rightarrow\infty} \sigma^2  I
	\end{equation}
	Equations \eqref{eq:L21} and \eqref{eq:L22} imply that, asymptotically in $N$, $L_{21}\simeq 0$ and $L_{22}\simeq \sigma I$. Therefore we have:
	\begin{equation}\label{eq:reg2eqreg1}
		\begin{array}{rcl}
			\Psi_u(\gamma_1,\gamma_2,\gamma_3) &:=&  \beta_2\|L_{21} \gamma_1 + L_{22} \gamma_2\|^2 + \beta_3 \|\gamma_3\|^2\\
			& \simeq & \beta_2 \sigma^2\|\gamma_2\|^2 + \beta_3 \|\gamma_3\|^2
		\end{array}
	\end{equation}
	showing that, up to the rescaling of the weight $\beta_2$, this is equivalent to $\Psi_{\gamma_2}(\gamma_1,\gamma_2,\gamma_3) $
\end{proof}

This result has two important implications:
\begin{itemize}
	\item  When the (training) input is white, regularization on $\gamma_2$ is equivalent to a penalty on the future input energy, which is typically present in the control cost. As such, we can  argue that, in this case, the control cost has an \textit{indirect but important} effect in counteracting the effect of the noise variance in the predictor.
	\item When the training input is not white, the control energy cost \emph{is not} equivalent to penalizing the norm of $\gamma_2$, which on the other hand should be penalized to limit the effect of noise variance. The simulation results in the next section indeed confirm that, when noise input is not white, regularization on $\gamma_2$ (i.e. $\Psi_{\gamma_2}$) has to be \rev{included}.
\end{itemize}
\subsection{Experimental analysis}\label{sec:examples}
In this section we shall illustrate, exploiting two numerical examples (one linear and one nonlinear\footnote{\rev{By working in a specific operating regime, the control of an unknown nonlinear system can be tackled as that of an uncertain linear system.}}), the role of different  regularization terms in the optimal control problem \eqref{eq:RHPC_prob_dd_gamma}. In particular,  following the rationale proposed in \cite{dorfler2022bridging}, we evaluate the  closed-loop performance over $T_v$ feedback steps as measured by the performance index:
\begin{equation}\label{eq:general_perf_index}
	\!\! J(u,y)\!=\!  \dfrac{1}{T_{v}}\!\sum\limits_{t=0}^{T_{v}-1}\!\!\left( \left\|u(t)\!-\!u_r(t) \right\|^{2}_{R} \!+\! \left\|y(t)\!-\!y_{r}(t) \right\|^{2}_{Q}\right) .
\end{equation}
\begin{figure}[h!]
	\centering
	\subfigure[Performance indexes]{\includegraphics[height=0.247\textwidth, trim={1.5cm 0cm 3cm 1cm},clip]{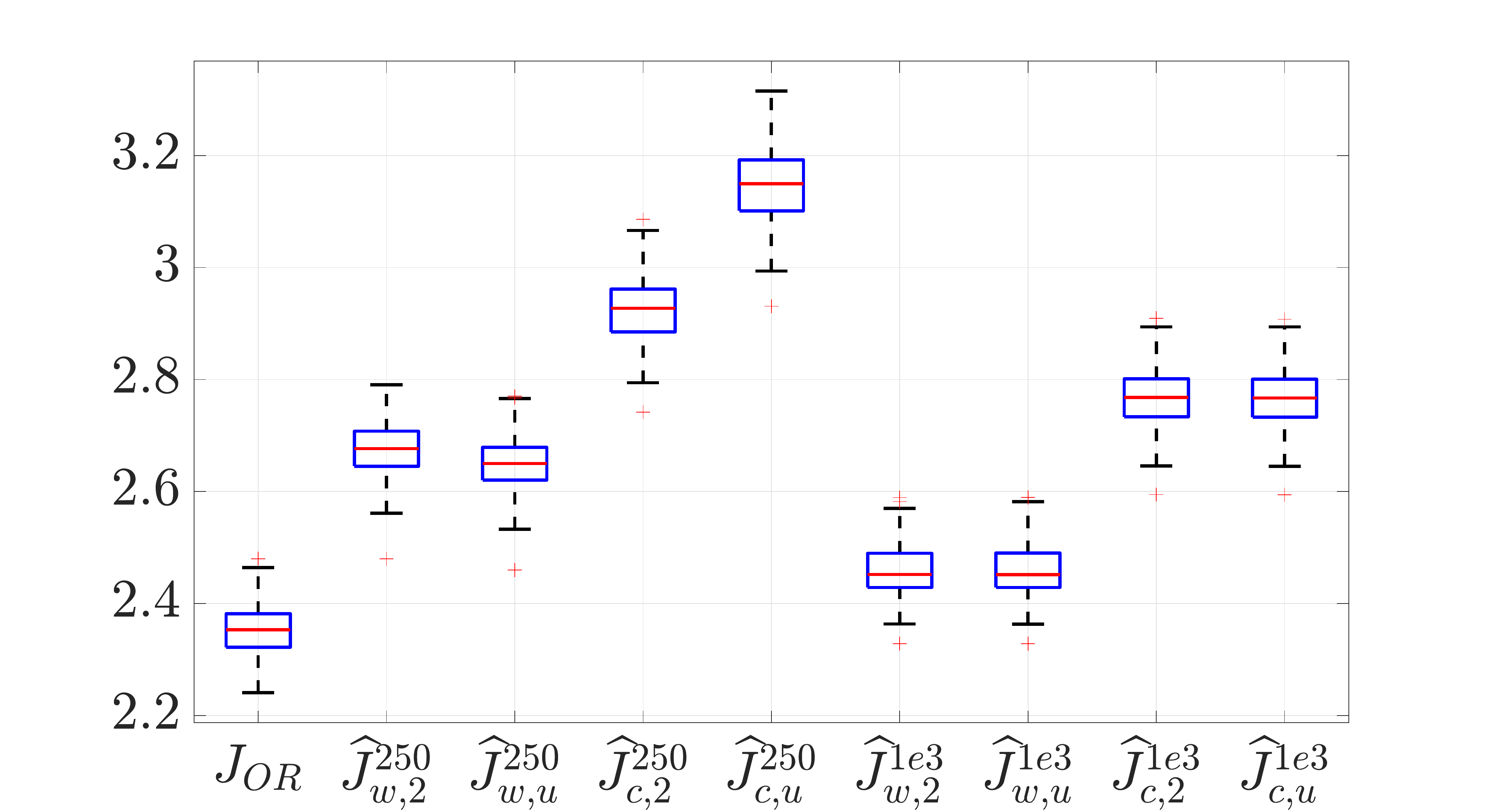}}
	\subfigure[Optimal performance under constraint $\beta_2=0$]
	{\includegraphics[height=0.26\textwidth,trim={0cm 0cm 2cm 0cm},clip]{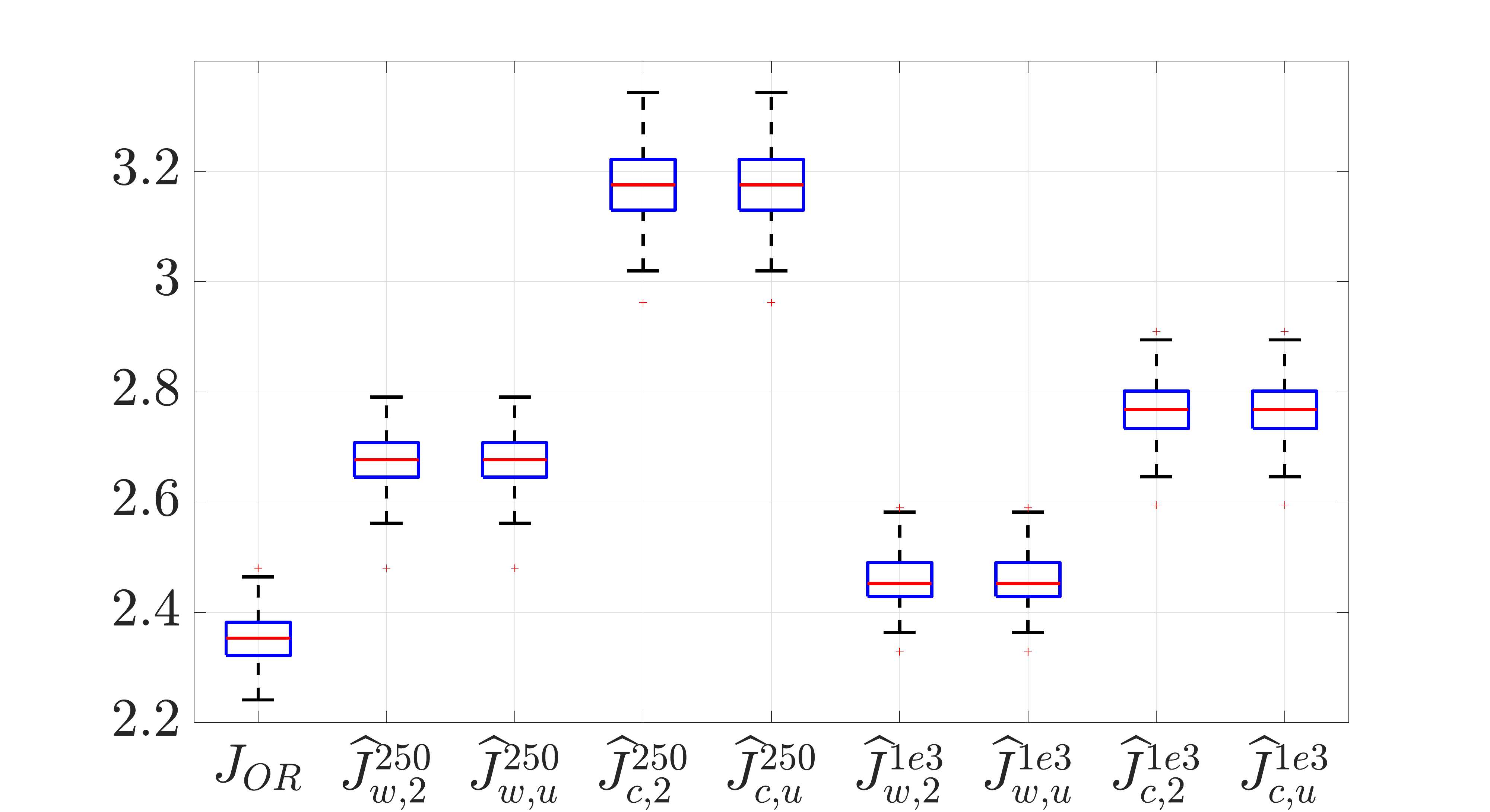}\label{fig:performances_b2=0}}
	\subfigure[Corresponding minimizers]{\includegraphics[height=0.28\textwidth,trim={0cm 0cm 2cm 1cm},clip]{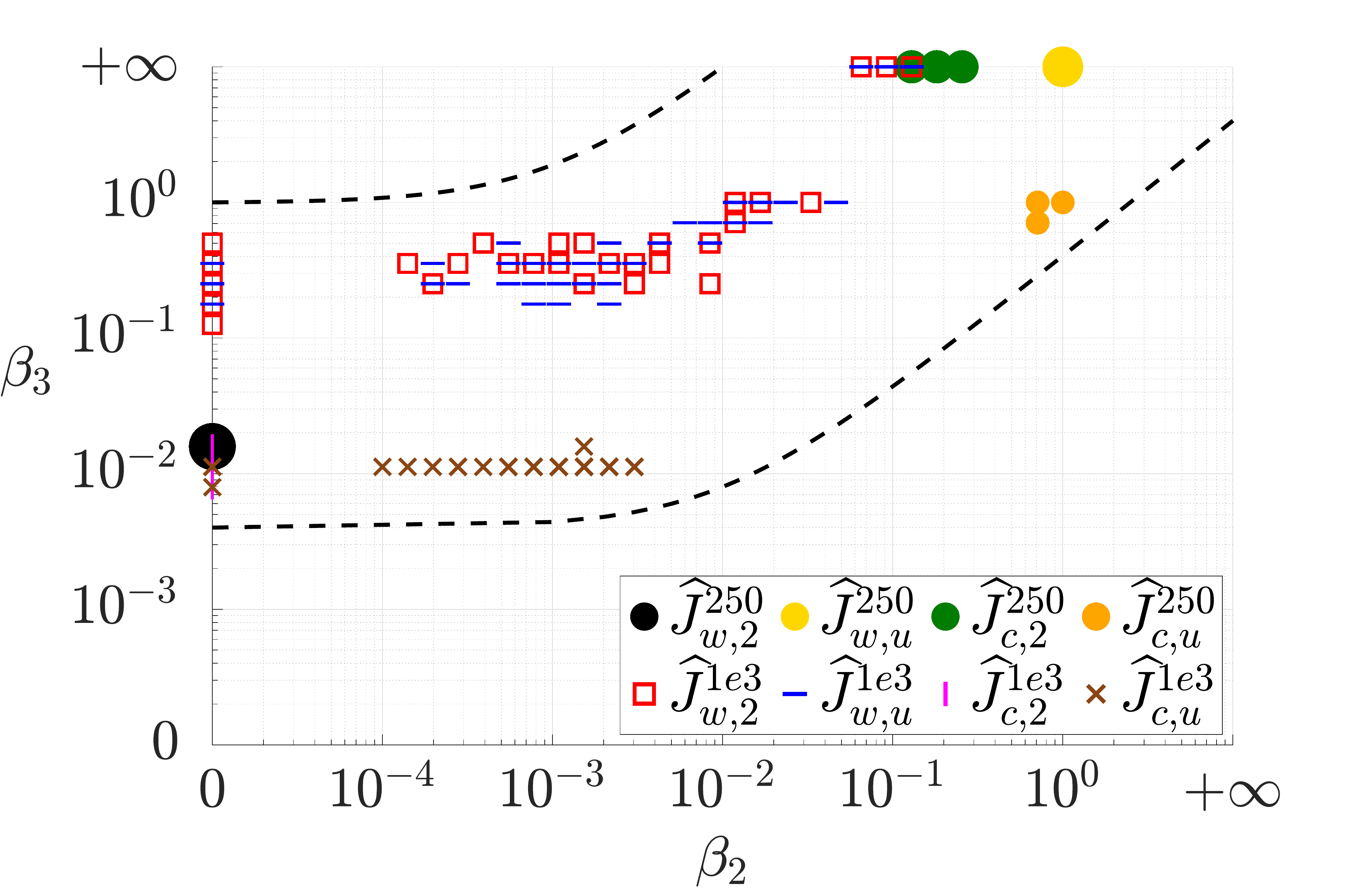}\label{fig:Markovski_scatter}}
	\caption{(a): comparison between the Kalman-filter-based oracle performance $J_{OR}$ and the minimum cost realizations $\widehat{J}^{N_{data}}_{n_s,rg}$ for $\Sigma_{L}$ over $100$ Monte Carlo runs;  
	\rev{(b):} Optimal performance under the constraint $\beta_2=0$; (c): distribution of the corresponding minimizers $(\beta_2^\star,\beta_3^\star)$.}
	\label{fig:Markovski_costs}
\end{figure}

\subsubsection{Benchmark LTI system}
Consider the SISO, 
$4$-th order 
system in \cite{LandauReyKarimi1995} ($\Sigma_{L}$ in the sequel) with a prediction horizon $T=20$. 
To assess the impact of the \emph{training} data  on closed-loop performance, we consider four data sets of two different lengths $N_{data}$ (either $250$ or $1000$), obtained either with white noise input (denoted with $n_s=w$) or with a low-pass   (obtained filtering white noise with  a discrete-time low-pass filter with cut-off angular frequency $1.8~rad/s$) input sequence (denoted with $n_s=c$).
White noise is added to the output to guarantee  a signal-to-noise ratio of $15$~dB.

The data-driven optimal control problem \eqref{eq:RHPC_prob_dd_gamma} is solved  
fixing $Q = 10^{3}$ and $R = 10^{-2}$,
setting the output reference $y_{r}(t) = \sin(5\pi t/(T+T_{v}-1))$ and the input reference $u_r(t) = 0$, with $T_{v} = 50$. The two different regularization strategies discussed in Section \ref{sec:reg} are denoted with the shorthands $rg=2$ 
\rev{and $rg=u$ for \eqref{eq:reg1} and \eqref{eq:reg2}, respectively.}

The ``oracle" value of $(\beta_2,\beta_3)$ leading to \textbf{the minimum cost \eqref{eq:general_perf_index}}, here denoted as $J_{n_s,rg}^{N_{data}}= J(u^{N_{data}}_{n_s,rg},y^{N_{data}}_{n_s,rg})$ to account for the different data set lengths, input and regularization strategies, are searched over a rectangular logarithmic-spaced grid $G_{23}$ with $7$ points per decade, 
so that $G_{23} \subseteq \left\lbrace 0 \right\rbrace \cup [10^{-4}, 10^{0}] \cup \left\lbrace +\infty \right\rbrace \times \left\lbrace 0 \right\rbrace \cup [10^{-3},10^{0}] \cup \left\lbrace +\infty \right\rbrace$.

We perform $100$ Monte Carlo experiments\footnote{\rev{The past horizon $\rho$ is determined using Akaike's information criterion.}} (i.e. $100$ different training data sets with the output of  $\Sigma_L$ corrupted by white noise) to tune $\beta_2$ and $\beta_3$
for all the four considered training scenarios and possible regularization. For each Monte Carlo run, and for each set of possible parameters in the grid $G_{23}$,  the closed loop performance index  \eqref{eq:general_perf_index} is computed by averaging over $100$ closed loop experiments  (all with the same control law but different closed measurements errors) the corresponding performance index $J^{N_{data}}_{n_s,rg}(i)$, i.e.  
\begin{equation}\label{eq:avg_index}
	{\widebar{J}^{N_{data}}_{n_s,rg}} = \frac{1}{100}\sum\nolimits_{i=1}^{100}	J^{N_{data}}_{n_s,rg}(i),
\end{equation}
The  optimal values of $\beta_2$ and $\beta_3$  over the grid $G_{23}$ is obtained by finding the minimum $\widehat{J}^{N_{data}}_{n_s,rg} {=\min_{(\beta_2,\beta_3) \in G_{23}} \{\widebar{J}^{N_{data}}_{n_s,rg}\}}$.

The results are reported  in Fig. \ref{fig:Markovski_costs}, based on which  we can make the following general considerations:
\begin{itemize}
	\item As expected based on Theorem \ref{thm:reg}, when the input is white noise, the two types of regularization provide the same performance (minor differences for $N_{data}=250$ are due to  sample variability).
	\item The ``optimal'' (oracle) closed loop performance obtained with the two different regularization strategies differ when the input is not white. In particular, the penalty \eqref{eq:reg1} that acts directly on $\gamma_2$ and thus controls the predictor variance provides the best performance, particularly so for small data sets where the effect of noise has more impact.  
	\item Based on the comparison between Fig.\ref{fig:Markovski_costs}(a) and Fig. \ref{fig:Markovski_costs}\rev{(b)} 
	(in which $\beta_2$ is constrained to zero), we can observe that the impact of $\beta_2$ is significant in the low-data regime (equivalent to large noise in the predictor), whereas for larger data sets its impact can be neglected and the optimal performances exploiting only $\beta_3$ match those obtained optimizing jointly $\beta_2$ and $\beta_3$.
	\item The location of the minimum points (see also Fig. \ref{fig:Markovski_scatter}) is different depending on the employed training signal, especially in the low data regime; indeed, preference is given to exploiting the regularization term on $\gamma_2$ (and in fact $\gamma_3 = \infty$ in most cases).
\end{itemize}

In light of the above observations, the general validity of \eqref{eq:RHPC_prob_dd_gamma} constrained to either $\beta_2=0$ or $\beta_3=\infty$ devised in \cite{breschi2022uncertainty} is strengthened, as different types of data set are given. The effectiveness of these $\gamma$-DDPC schemes is also reinforced since it is evident that, in most cases, the operative tuning of either the sole parameter $\beta_2$ or the sole parameter $\beta_3$ is worth to be carried out in practice.

\begin{figure}[t!]
	\centering
  \subfigure[Training data set of size $10^4$]{\includegraphics[height=0.29\textwidth, trim={0.3cm 0cm 1cm 0cm},clip]{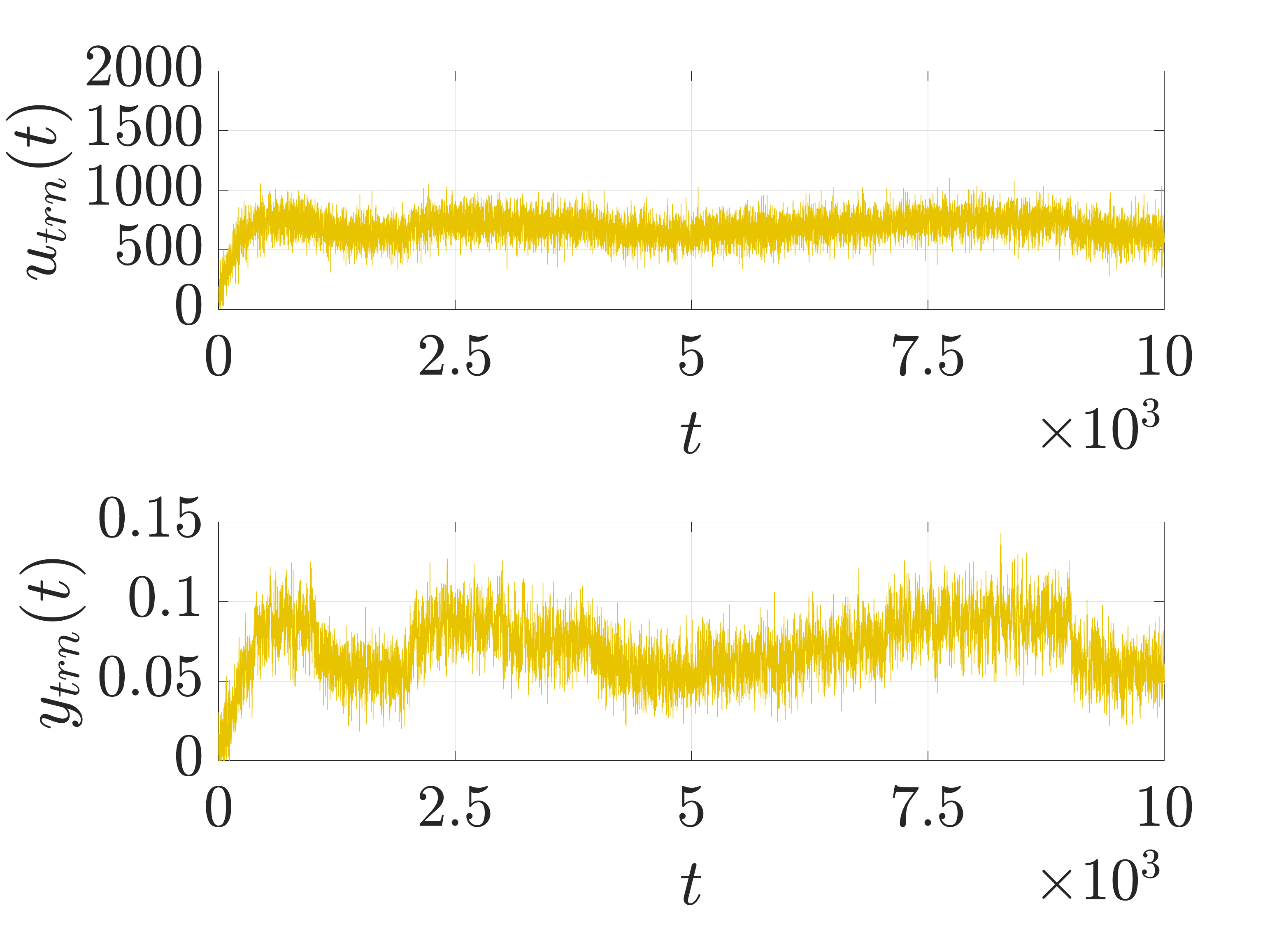}\label{fig:training}}
	\subfigure[Performance indexes]{\includegraphics[width=0.35\textwidth, trim={1.8cm 0cm 1cm 0.3cm},clip]{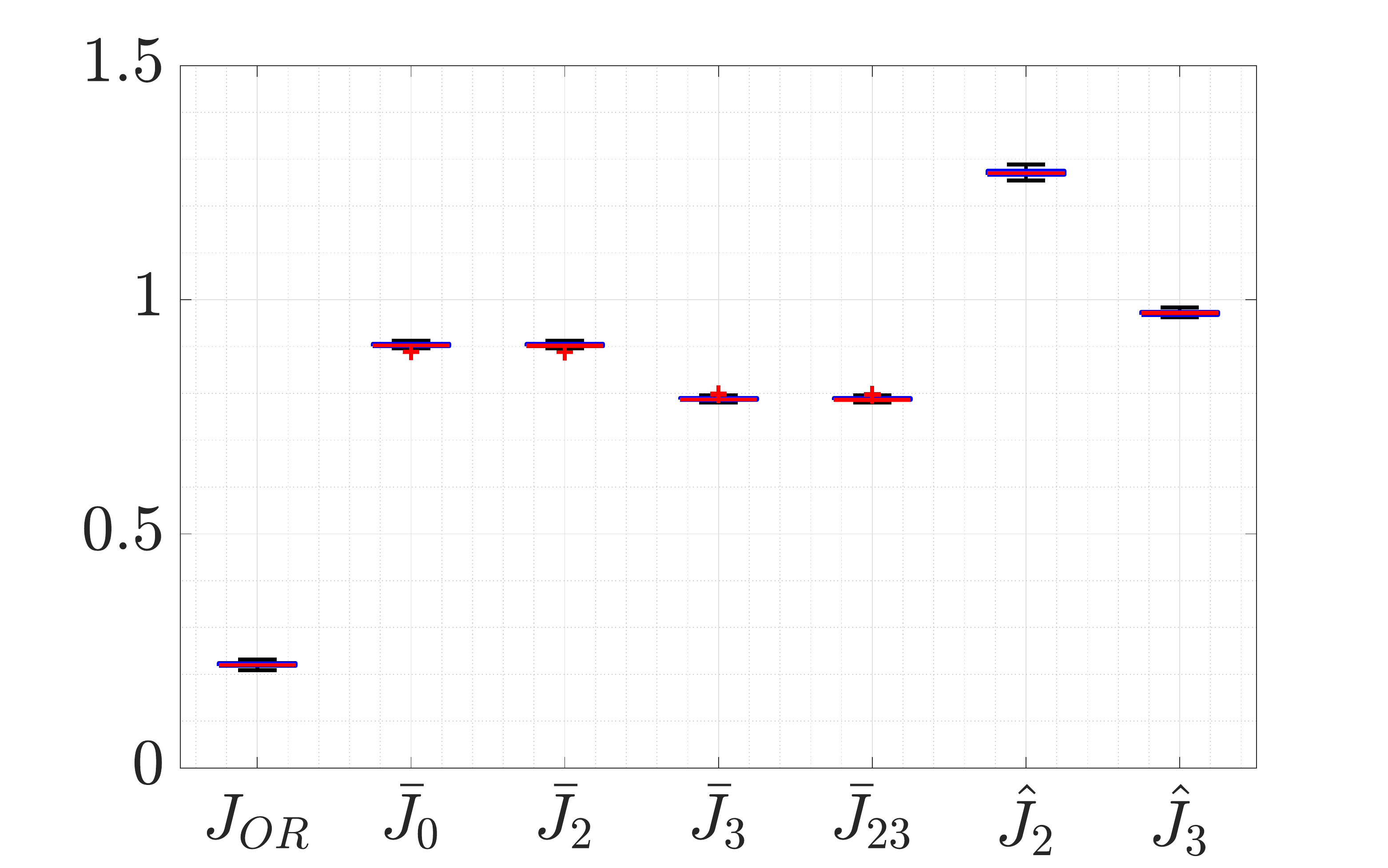}\label{fig:BS_perf_idx_exact}}
	\hspace{5mm}
	\subfigure[MPC-based oracle]{\includegraphics[height=0.26\textwidth, trim={0.3cm 0cm 1cm 0cm},clip]{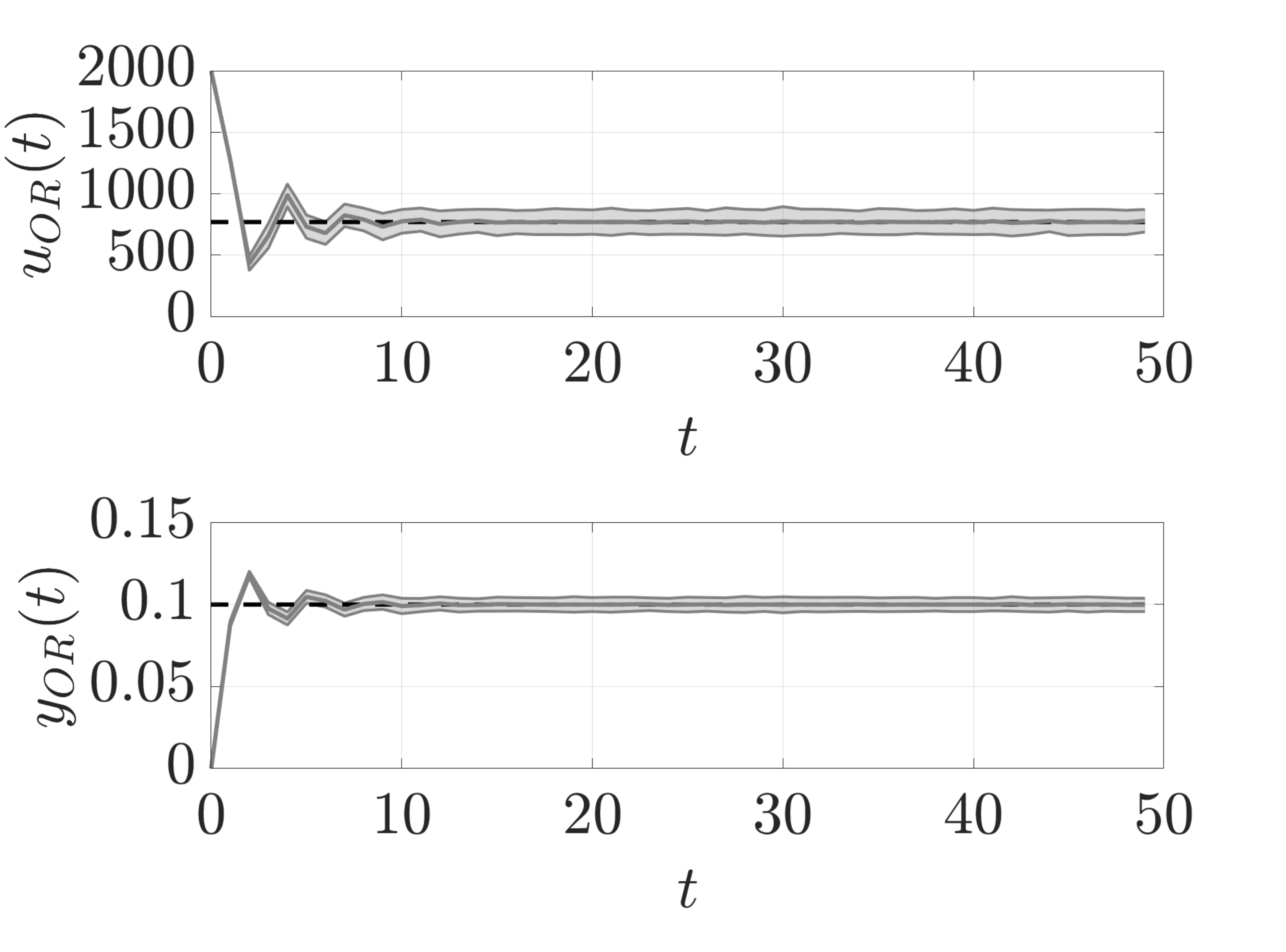}\label{fig:BS_oracle_inout_exact}}

	\caption{ (a): training data set employed for all the $\gamma$-DDPC experiments on the wheel slip control problem. 
	\rev{(b):} comparison of the performance indexes obtained with different $\gamma$-DDPC strategies (bar and hat notation indicating offline and online approaches respectively) and a model-based oracle. \rev{The subscript $a \in \{0, 2,3,23\}$ on $J$  refers to the regularization scheme (respectively: no regularization,  $\beta_2$, $\beta_3$ or both)}; 
	\rev{(c):} input/output tracking obtained from an MPC-based oracle. Mean (line) and $1.95$ times the standard deviation (shaded area) of the closed-loop input/output trajectories; the reference input and output are indicated with black dashed lines. 
 }
	\label{fig:BS_sims_tuned}
\end{figure}

\subsubsection{Wheel slip control problem}

\begin{figure*}[t!]
	\centering
	\subfigure[$\bar J_0$: $(\beta_2,\beta_3) = (0,+\infty)$]{\includegraphics[height=0.23 \textwidth, trim={0.3cm 0cm 1cm 0.2cm},clip]{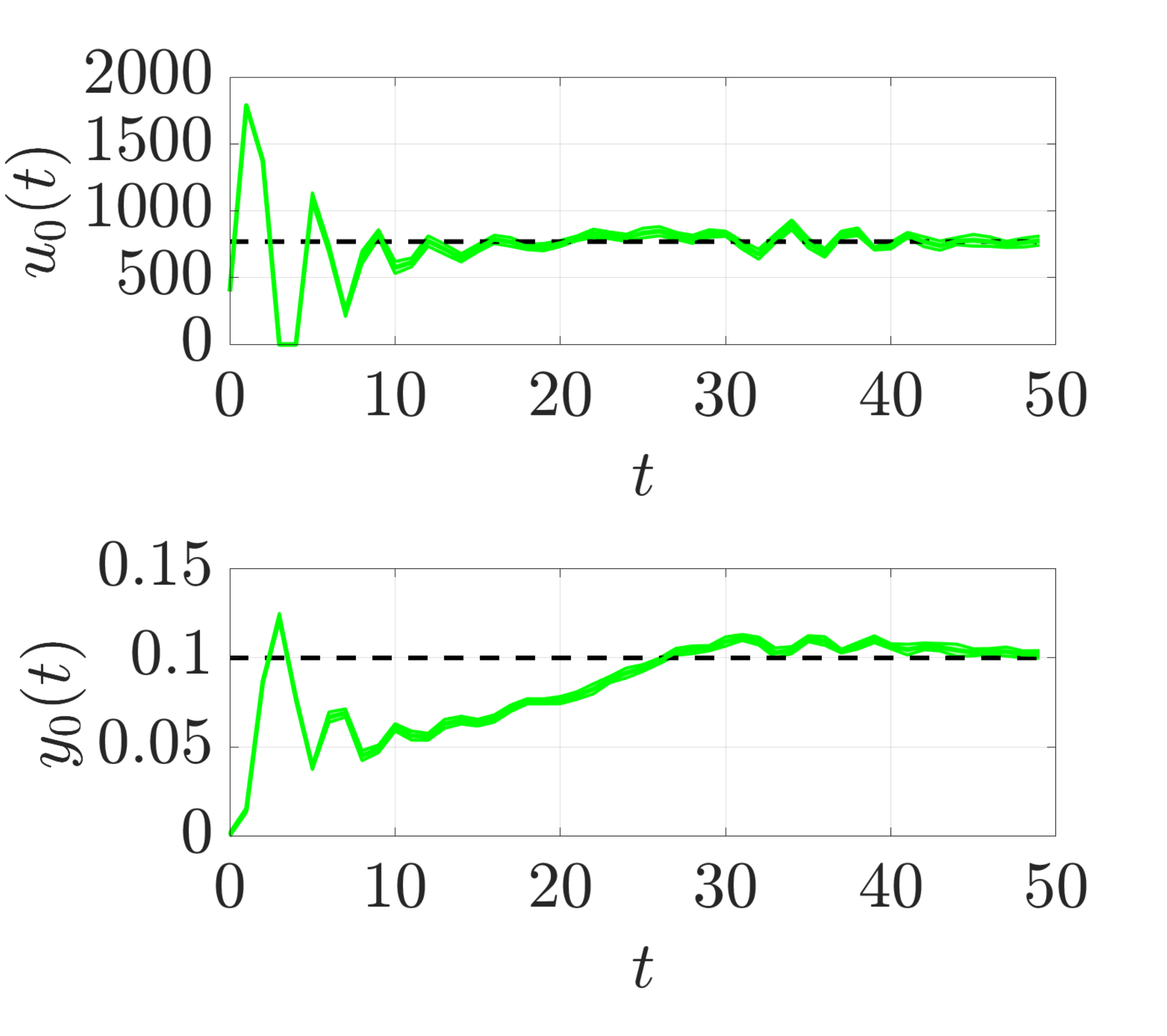}\label{fig:BS_noreg_inout_exact}}
	\hspace{5mm}
	\subfigure[$\bar J_2$: $(\beta_2,\beta_3)=(\bar{\beta}_2,+\infty)$]{\includegraphics[height=0.23\textwidth, trim={0.3cm 0cm 1cm 0cm},clip]{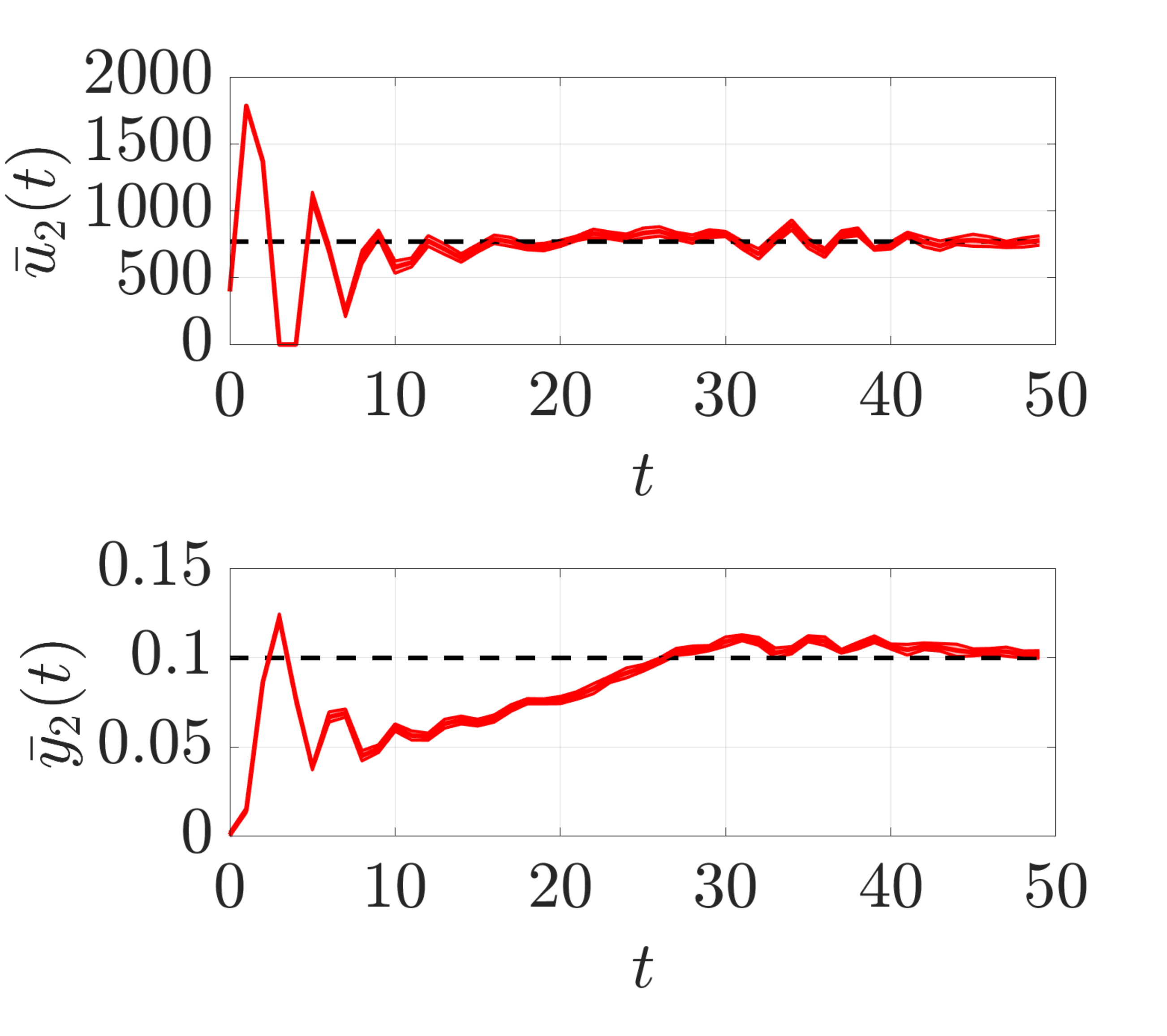}\label{fig:BS_b2bar_inout_exact}}
	\hspace{5mm}
	\subfigure[$\bar J_3$: $(\beta_2,\beta_3) = (0,\bar{\beta}_3$)]{\includegraphics[height=0.23\textwidth, trim={0.3cm 0cm 1cm 0cm},clip]{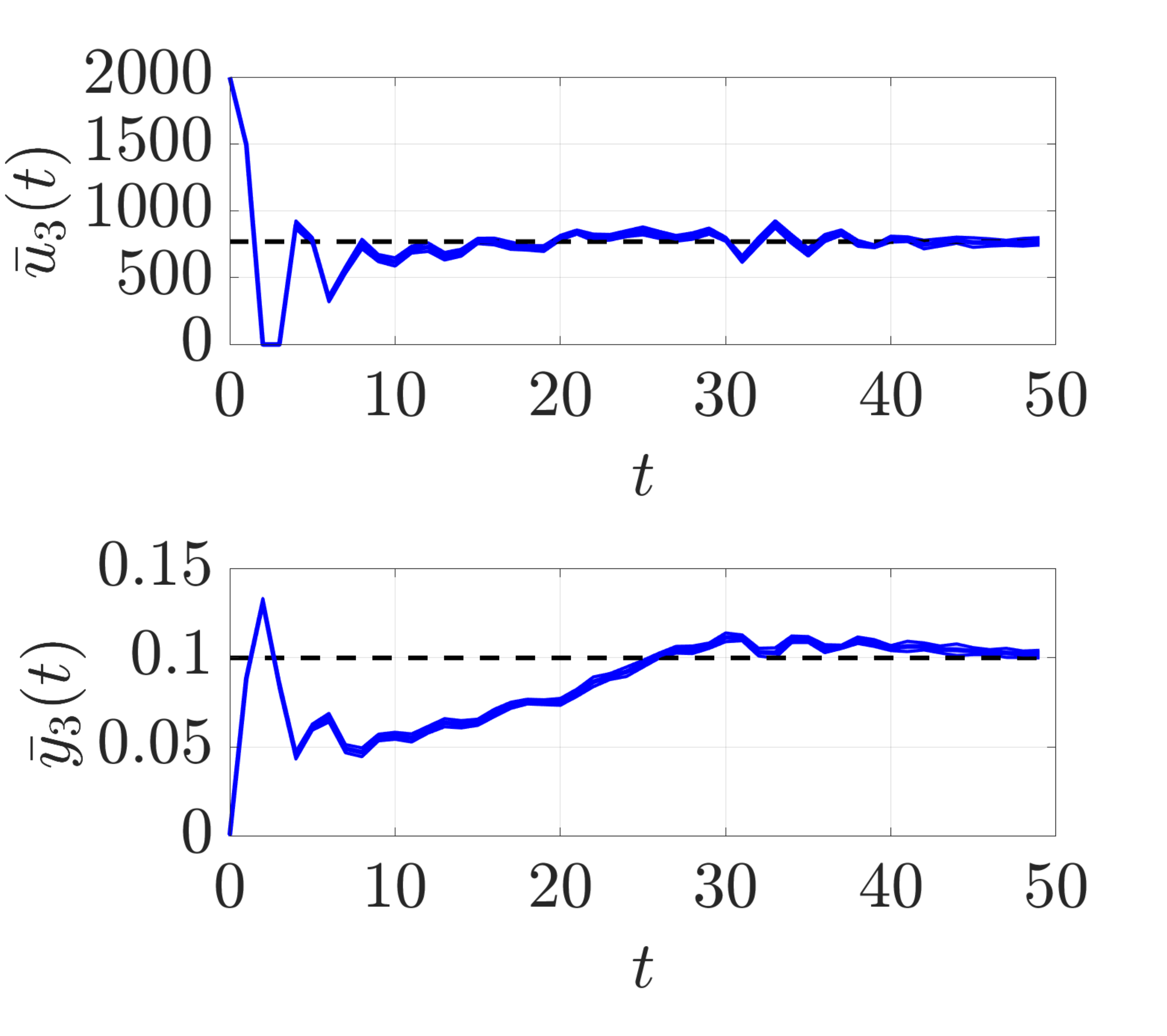}\label{fig:BS_b3bar_inout_exact}}
	\\
	\subfigure[$\bar J_{23}$: $(\beta_2,\beta_3) = (\beta^{\star}_2,\beta^{\star}_3)$]{\includegraphics[height=0.23\textwidth, trim={0.3cm 0cm 1cm 0.2cm},clip]{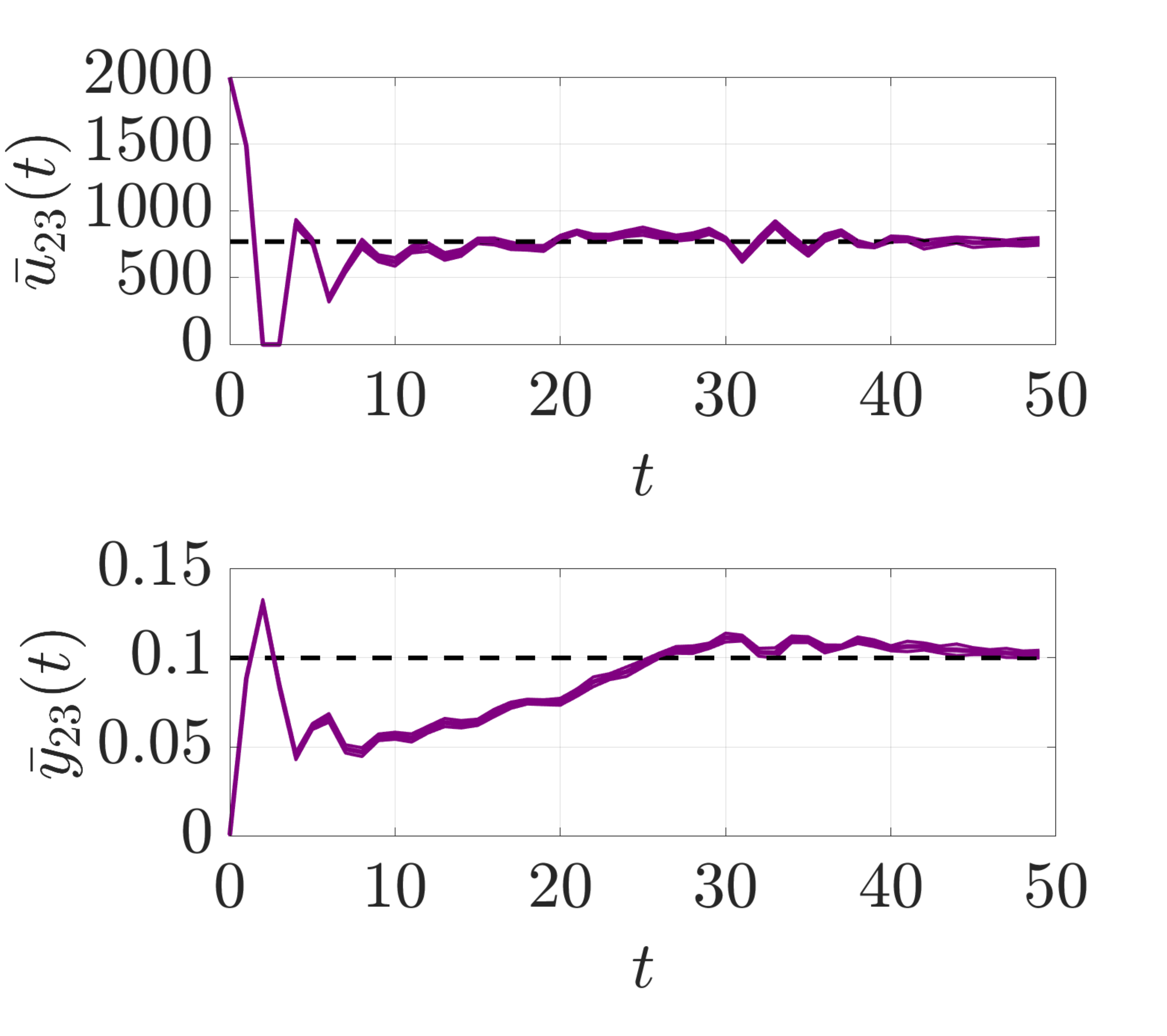}\label{fig:BS_b23bar_inout_exact}}
	\hspace{5mm}
	\subfigure[$\hat J_2$: $(\beta_2,\beta_3) = (\hat \beta_{2},+\infty)$]{\includegraphics[height=0.23\textwidth, trim={0.3cm 0cm 1cm 0cm},clip]{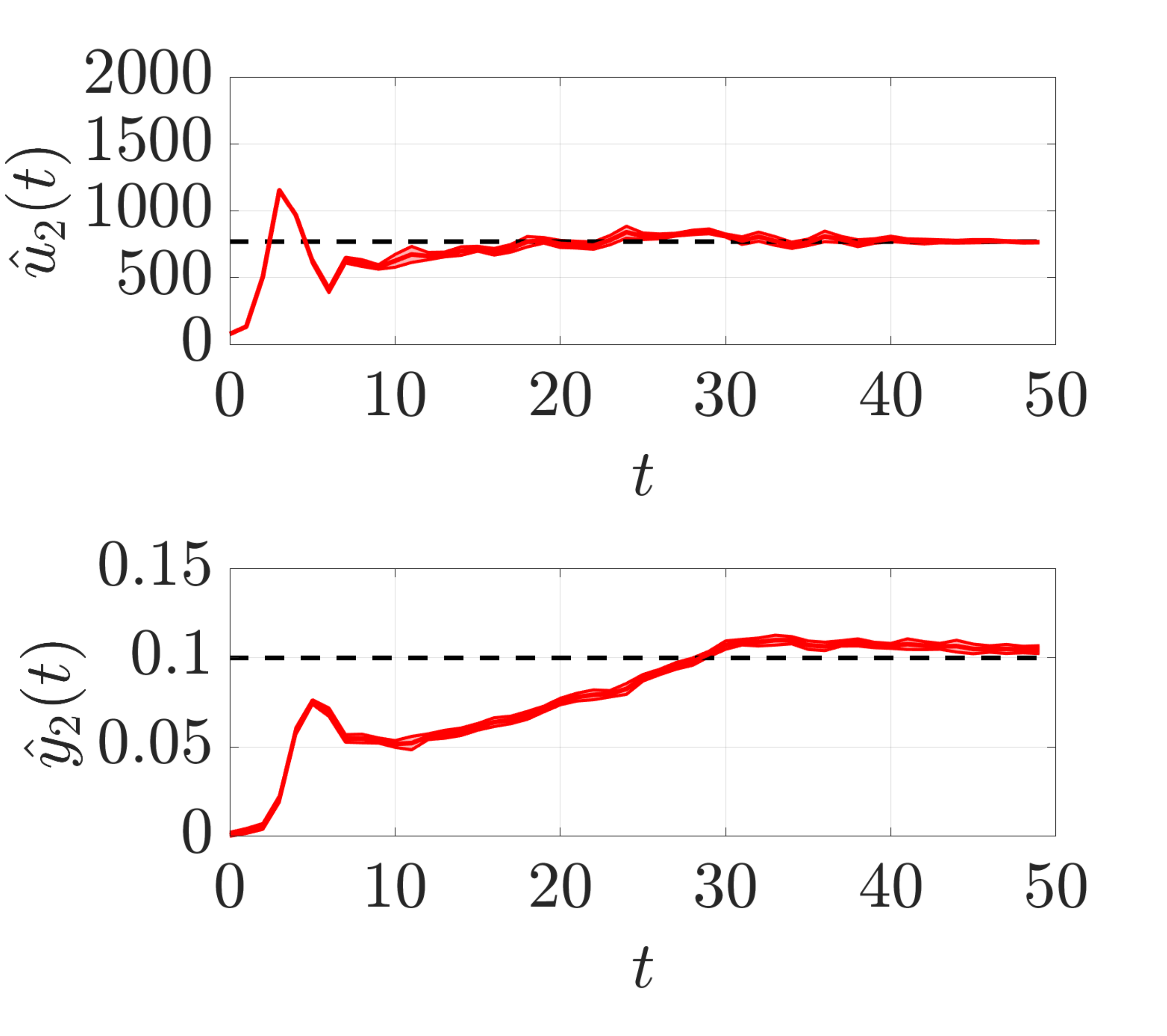}\label{fig:BS_b2tuned_inout_exact}}
	\hspace{6mm}
	\subfigure[$\hat J_3$: $(\beta_2,\beta_3) = (0,\hat \beta_{3})$]{\includegraphics[height=0.23\textwidth, trim={0.3cm 0cm 1cm 0cm},clip]{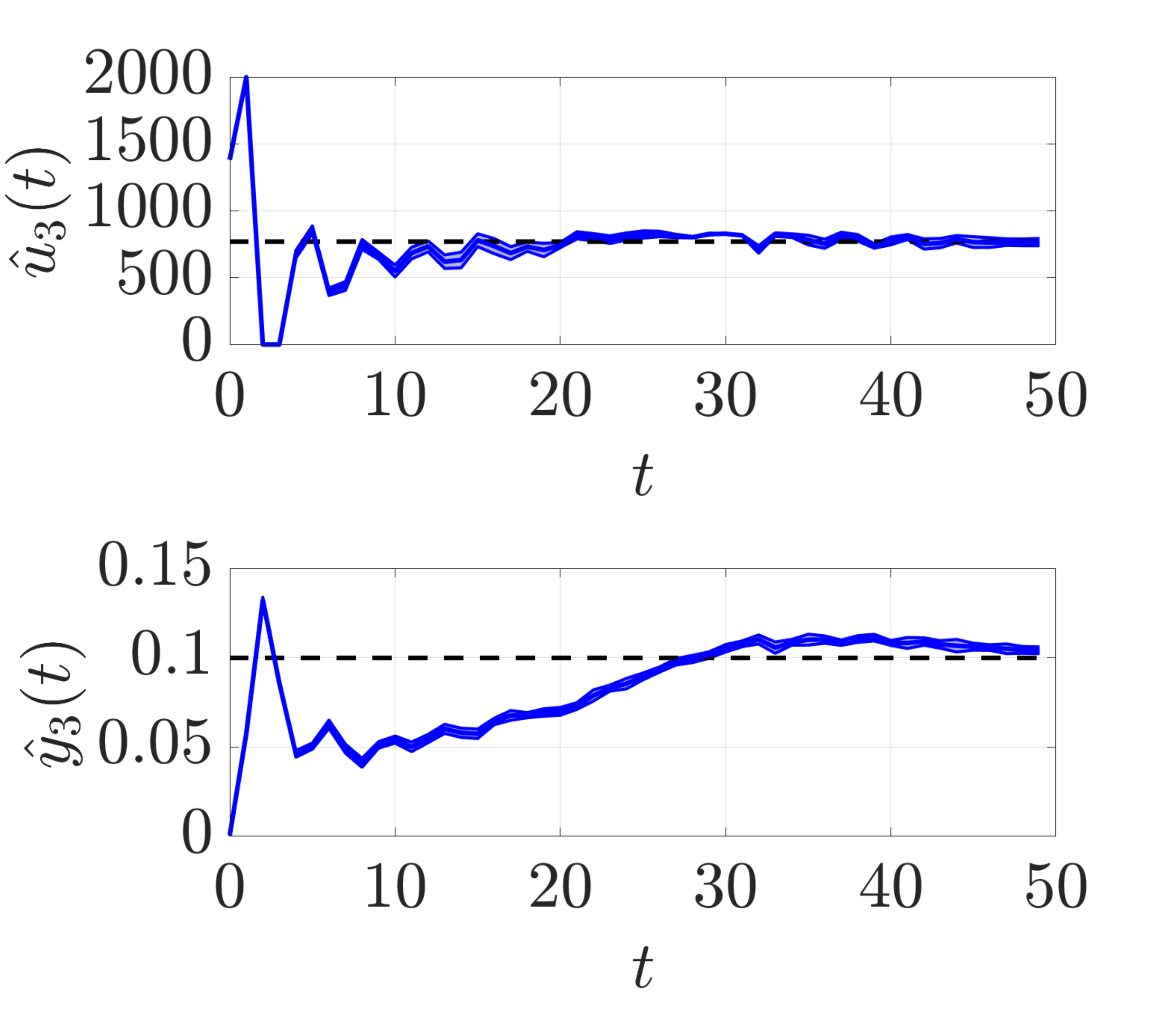}\label{fig:BS_b3tuned_inout_exact}}
	\caption{For all diagrams: mean (line) and $1.95$ times the standard deviation (shaded area) of the closed-loop input/output trajectories; the reference input and output are indicated with black dashed lines.
		(a): $\gamma$-DDPC with no regularization;
		(b)-(c): offline regularization strategies employing $\bar{\beta}_2$ and $\bar{\beta}_3$ separately;
		(d): offline regularization strategies employing $\bar{\beta}_2$ and $\bar{\beta}_3$ jointly;
		(e)-(f): online regularization strategies employing $\hat{\beta}_2$ and $\hat{\beta}_3$ separately.}
	\label{fig:BS_sims_new}
\end{figure*}

We now consider the problem of designing a wheel slip controller, steering the vehicle slip $\lambda(t) \in [0,1]$ to a constant target value $\lambda_{r}$. The design is carried out by focusing on quasi-stationary operating condition (the parameters of the vehicle, its velocity and the road profile are assumed to be constant). In both data collection and closed-loop testing, the behavior of the braking system (from now on indicated as $\Sigma_{NL}$) is simulated based on the 
\emph{nonlinear} 
\rev{model in} \cite{Formentin2015}:
\begin{subequations}\label{eq:slip_dynamics}
	\begin{equation}
		\dot{\lambda}(t)=-\frac{1}{v}\left(\frac{1-\lambda(t)}{m}+\frac{r^2}{J}\right)mg\mu(\lambda(t))+\frac{r}{Jv}T_{b}(t),
	\end{equation}
	where the road friction coefficient $\mu(\lambda(t))$ is assumed to be dictated by the Burckhardt model, \emph{i.e.,} 
	\begin{equation}\label{eq:Burckhardt_model}
		\mu(\lambda(t))=\alpha_{1}\left(1-e^{-\alpha_2\lambda(t)}\right)-\alpha_3 \lambda(t).
	\end{equation}
\end{subequations}
\rev{We indicate with} $T_b(t)$ [Nm] 
the \emph{controllable} braking torque and 
set the \rev{system} parameters 
to the same values used in \cite{sassella2022}. Although this dynamics is clearly nonlinear, it is possible to identify two main operating regions of the system\footnote{In this case, these two regions are limited by the slip $\bar{\lambda}=0.17$, for slip values lower than $0.17$; while it becomes unstable for higher slips.}, where the behavior of the slip can be approximated as linear. To comply with our framework (see \eqref{eq:stoc_sys}), we thus consider both data collection and simulation tests where the vehicle generally operates in a low-slip regime. Accordingly, data are gathered by performing closed-loop experiments with the benchmark controller introduced in \cite{savaresi2010active}, selecting a slip reference uniformly chosen at random in the interval $[0,0.15]$ and collecting data at a sampling rate of $100$~[Hz].
In particular, the output $y_{trn}(t)$ of the employed training data set shown in Fig. \ref{fig:training} 
is generated by exploiting a closed-loop experiment wherein the output is corrupted by a zero-mean white noise process with variance $\sigma^2_n = 10^{-6}$ and, also, zero-mean white noise with variance $10^8 \sigma^2_n$ is added to the input $u_{trn}(t)$ provided by the controller.

Meanwhile, the reference slip for the closed-loop tests is $\lambda_{r}=0.1$, corresponding to a reference braking torque $T_{b,r}=768.9$ [Nm]. To improve the tracking performance in closed-loop, apart from the terms weighting the tracking error and the difference between the predicted and reference torque, respectively weighted by $Q = 10^{3}$ and $R = 10^{-7}$, the cost of the $\gamma$-DDPC problem \eqref{eq:cost_gammaDDPC} is augmented with a term penalizing abrupt variations of the input (weighted as $10^{-4}$), a term penalizing the integral of the tracking error (weighted as $10^5$), and two terms further penalizing the difference between the slip and torque references and their actual value over the last step of the prediction horizon (weighted as $10^3$ and $5 \cdot 10^{-6}$, respectively). The following constraint is also added at each feedback step $t\geq 0$ for $s = 0,\ldots,T-2$:
\begin{equation}
	\begin{cases}
		q(t+s) = q(t+s-1) + y_r(t+s-1)-\hat{y} (t+s-1);\\
		q(t-1) = y_r(t-1)-y(t-1);
	\end{cases}
\end{equation}
to account for the known dynamics of the integrator. Nonetheless, performances are still assessed via the index in \eqref{eq:general_perf_index} over a closed-loop test of $T_{v}=50$ steps. 
A Monte Carlo campaign with $100$ iterations is run on the above setup, corrupting the 
output of $\Sigma_{NL}$ 
with a white noise having signal-to-noise ratio $40$ dB. For each of the $100$ tests, the regularization parameters $\beta_2$ and $\beta_3$ are both selected from a grid over $[10^{-4},10^4]$ comprising of $15$ logarithmic-spaced points. For the joint optimization, the squared grid $\{\bar{\beta}_2,\bar{\beta}_2/10,\bar{\beta}_2 /100, 10^8\} \times \{ 0,\bar{\beta}_3,10\bar{\beta}_3,100\bar{\beta}_3\}$ composed by the optimal values $(\bar{\beta}_2,\bar{\beta}_3)$ obtained via offline $\gamma$-DDPC is instead taken into account.
Fig. \ref{fig:BS_perf_idx_exact} depicts the distributions of the 
performance index in \eqref{eq:general_perf_index} as the selected regularization strategy varies considering $(\beta_2,\beta_3)$ tuned either offline or online and comparing $\gamma$-DDPC with a MPC-based oracle (see also Fig. \ref{fig:BS_oracle_inout_exact}). In particular, the input-output trajectories of all $\gamma$-DDPC strategies can be summarized in Fig. \ref{fig:BS_sims_new}. Although the MPC-based oracle displays evident preeminence, it is worthwhile to appreciate that all these trajectories are characterized by solid performances (rise time of at most $5$ steps, settling time of about $25$ steps, maximum overshoot of $40\%$ or less with no cross into the unstable region), with such traits indicating that $\gamma$-DDPC schemes remain competitive even in nonlinear scenarios. Noticeably, the online strategies (implementable on real applications) shown in Fig. \ref{fig:BS_b2tuned_inout_exact}-\ref{fig:BS_b3tuned_inout_exact} share similar performances with the corresponding offline strategies in Fig. \ref{fig:BS_b2bar_inout_exact} - \ref{fig:BS_b3bar_inout_exact}, especially that relying on the online tuning of parameter $\beta_3$. Moreover, within the setup of this numerical example, one observes that the performance of the offline strategy based on $\beta_2$ strictly matches with that of the scheme lacking of regularization (Fig. \ref{fig:BS_noreg_inout_exact}); whereas, the performance of the offline strategy based on $\beta_3$ strictly matches that of the scheme in which a joint optimization of both $\beta_2$ and $\beta_3$ (Fig. \ref{fig:BS_b23bar_inout_exact}) is carried out. Hence, under this setup and with the data collected in this numerical example, it emerges once again that the optimal tuning based on the sole penalty parameter $\beta_3$ (i.e., setting $\beta_2 = 0$) can be considered \textit{in practice} for high-data regimes. This, in turn, may lead to significantly diminish the computational burden associated to the tuning of the penalty parameters whenever a real implementation based on the proposed regularized scheme \eqref{eq:RHPC_prob_dd_gamma} is considered and a big training data set is available.

The above comparison further highlights that regularized DDPC approaches can be competitive w.r.t. traditional model-based controllers and that $\gamma$-DDPC solution with the online tuning proposed in \cite{breschi2022uncertainty} can be robustly effective also when dealing with nonlinear systems.


\section{Concluding remarks and future directions}\label{sec:conclusions}
Several  regularization strategies for Data Driven Predictive Control ($\gamma$-DDPC) have been discussed and evaluated in terms of  closed-loop performance. It has been proved that when the input is white, regularizing $\gamma_2$ and penalizing control energy are equivalent.  Numerical examples further illustrate that the tuning of the penalty parameters in the $\gamma$-DDPC can be decoupled without dramatically impacting the performance corresponding to a (more costly) joint regularization wherein both $\beta_2$ and $\beta_3$ are accounted for. 

Future work will be devoted to the extensive experimental assessment of the considered regularization strategies, as well as to a 
theoretical analysis of the optimization of the sole $\beta_3$.

\bibliographystyle{IEEEtran}
\bibliography{biblio}

\end{document}